\newif\ifarxiv
\pgfplotsset{compat=1.18}
\newcolumntype{L}[1]{>{\raggedright\let\newline\\\arraybackslash\hspace{0pt}}m{#1}}
\newcolumntype{C}[1]{>{\centering\let\newline\\\arraybackslash\hspace{0pt}}m{#1}}
\newcolumntype{R}[1]{>{\raggedleft\let\newline\\\arraybackslash\hspace{0pt}}m{#1}}
\DeclareMathOperator{\rank}{rank}
\DeclareMathOperator*{\argmax}{arg\,max}
\theoremstyle{plain}
\newtheorem{theorem}{Theorem}
\newtheorem{definition}{Definition}
\newcommand{\GC}{\textnormal{\textsc{greedy capture}}}
\newcommand{\PV}{\textnormal{\textsc{plurality veto}}}
\journal{Operations Research Letters}
\newcommand{\thetitle}{Proportional Clustering, the $\beta$-Plurality Problem, and Metric Distortion}
\newcommand{\theleon}{Leon Kellerhals}
\newcommand{\thejannik}{Jannik Peters}
\newcommand{\thetuc}{TU Clausthal}
\newcommand{\thetuccountry}{Germany}
\newcommand{\thenus}{National University of Singapore}
\newcommand{\thenuscountry}{Singapore}
\newcommand{\theleonmail}{leon.kellerhals@tu-clausthal.de}
\newcommand{\thejannikmail}{peters@nus.edu.sg}
\newcommand{\theabstract}{%
We show that the proportional clustering problem using the Droop quota for $k = 1$ is equivalent to the $\beta$-plurality problem. We also show that the \PV{} rule can be used to select ($\sqrt{5} - 2$)-plurality points using only ordinal information about the metric space and resolve an open question of \citet{KKK24a} by proving that $(2+\sqrt{5})$-proportionally fair clusterings can be found using purely ordinal information.%
}
\newcommand{\thekeyword}{proportional clustering \sep computational social choice \sep computational geometry \sep voting theory}
\title{\thetitle}
\author{\textbf{\theleon}\\{\thetuc, \thetuccountry}\\\texttt{\small\theleonmail} \and \textbf{\thejannik}\\{\thenus}\\\texttt{\small\thejannikmail}}
\date{}
\begin{document}

\ifarxiv
\maketitle
\begin{abstract}\theabstract\end{abstract}
\else
\begin{frontmatter}
	\title{\thetitle}
	\author[label1]{\theleon}
\affiliation[label1]{organization={\thetuc},%
            country={\thetuccountry}}
\author[label2]{\thejannik}
\affiliation[label2]{organization={\thenus},%
            country={\thenuscountry}}
\begin{abstract}\theabstract\end{abstract}

\begin{keyword}\thekeyword\end{keyword}
\end{frontmatter}
\fi

\section{Introduction}
In this research note, we uncover a connection between the proportional clustering \citep{CFL+19a} and the $\beta$-plurality problem \citep{ADGH21a}. The proportional clustering problem is concerned with finding a fairness measure for clustering/facility location problems, ensuring that each sufficiently large set of points to be clustered has a cluster center nearby. If we have $n$ points to cluster and $k$ cluster centers to choose, a natural threshold, or \emph{quota}, for ``sufficiently large'' is $\frac{n}{k}$. In the literature on multiwinner voting, this is historically known as the Hare quota \citep{Tide95a}. This quota serves as the basis for most works on modern multiwinner voting and proportional clustering \citep{ABC+16a, PeSk20a,BrPe23a, MPS23a, KePe23a, ALM23a}. However, it is neither the ``smallest possible'' quota nor the quota most frequently used in real-life elections. Instead, this is the Droop quota $\frac{n}{k+1}$ \citep{Droo81a}, which is also being used (with slight modifications) in real-life elections, for instance in council elections in Scotland \citep{McGr23a}. 

We give a connection between proportional clustering using the Droop quota and the $\beta$-plurality problem
and show that the two are equivalent when fixing the number of cluster centers to $k=1$.
 We additionally uncover a relationship to the metric distortion problem and show that $\beta$-plurality points achieve constant distortion.
 Further, we show that the \PV{} rule of \citet{KiKe22a} (originally designed to achieve good distortion) can be used to find $\Omega(1)$-plurality points, even if the actual distances are unknown and only ordinal preferences are given.
 Our observations from the last result also allow us to show that one can always find a $k$-clustering satisfying $(2+\sqrt{5})$-proportionality using only ordinal information,
 thereby answering an open question of \citet{KKK24a}.

\section{Proportional Clustering and Plurality Points}

In this section, we give a connection between plurality points and proportional clustering with $k=1$ and when using the Droop quota.
We first define the two concepts before proving the connection.     

\subsection{Proportional clustering}

The proportional clustering problem was introduced by \citet{CFL+19a} as a fairness measure for clustering and facility location problems.
They introduced the following notion for the fixed quota $\ell = \frac{n}{k}$. 

\begin{definition}[$\alpha$-approximate $\ell$-proportionality]
	Let $\alpha, \ell \ge 1$.
	For a metric space $(\mathcal X, d)$ and a set $N \subseteq \mathcal X$ of agents, a $k$-clustering $W \subseteq \mathcal X$, $|W| = k$, satisfies \emph{$(\alpha, \ell)$-proportionality} if there is no group $N' \subseteq N$ of at least $|N'| \ge \ell$ agents for which there exists a point $c \in \mathcal X$ such that, for all $i \in N'$,
	\[ \alpha \cdot d(i, c) < d(i, W). \]
	If $\ell=n/k$, then we say the outcome satisfies \emph{$\alpha$-proportionality}.
	If $\ell = \lfloor n/(k+1)\rfloor + 1$, then we say the outcome satisfies \emph{$\alpha$-Droop proportionality}.
\end{definition}
Note that $\ell = \lfloor n/(k+1)\rfloor + 1$ is the smallest integer value such that after removing $k$ blocks of agents each of size $\ell$ there are less than $\ell$ agents left. 

To show that there always exists a clustering satisfying $\alpha$-proportionality, \citet{CFL+19a} introduced the \GC{} algorithm:
Place a ball with radius $y=0$ around each point $p \in \mathcal X$, then smoothly increase $y$.
Once the ball around a point $p$ contains $\ell$ unassigned agents, open a cluster center at $p$ and assign the agents within the ball to $p$.
If the ball around an open center $p$ reaches further unassigned asgents, these are also assigned to $p$.

For $\ell = \frac{n}{k}$ \citet*{CFL+19a} showed that the outcome of \GC{} always satisfies $(1 + \sqrt{2})$-approximate $\ell$-proportionality, while additionally giving instances for any $\varepsilon > 0$ that do not admit clusterings satisfying $(2 - \varepsilon)$-approximate $\ell$-proportionality. These results were improved by \citet{MiSh20a} for Euclidean metric spaces (with the $\ell_2$ norm) for which \GC{} always selects a $2$-proportional outcome, with the corresponding lower bound being $\frac{2}{\sqrt 3}$.

We first note that the upper bounds and proofs of \citet{CFL+19a} and by \citet{MiSh20a} also translate to any integral $\ell > \frac{n}{k+1}$ and thus also to Droop proportionality. 
Indeed, the proofs work for any value of $\ell$ as long as for that $\ell$, \GC{} returns at most $k$ centers.
\begin{theorem}
	Let $\ell \in \mathbb N$ with $\ell > \frac{n}{k+1}$.
	Then \GC{} returns an outcome satisfying $\alpha$-approximate $\ell$-proportionality, where $\alpha \le (1 + \sqrt{2})$ in general metric spaces and $\alpha \le 2$ in Euclidean spaces.
\end{theorem}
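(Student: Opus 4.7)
The plan is to reduce the claim to two observations. The first observation, which is the only part that genuinely depends on the value of $\ell$, is that \GC{} outputs at most $k$ centers whenever $\ell > n/(k+1)$. The second observation, which is essentially a pointer to the original arguments, is that the proportionality bounds of \citet{CFL+19a} and \citet{MiSh20a} are proved \emph{per group} and therefore do not use the specific value $\ell = n/k$ anywhere.

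For the first observation, I would argue as follows. At the moment \GC{} opens a new center $p$, by definition the ball around $p$ contains $\ell$ agents that were unassigned immediately before, and all of these are then assigned to $p$. In particular, the agent sets that trigger the opening of distinct centers are pairwise disjoint, so every center is credited with at least $\ell$ agents. Hence the total number of centers is at most $\lfloor n/\ell\rfloor$. Since $\ell > n/(k+1)$ implies $n/\ell < k+1$, and the number of centers is an integer, this number is at most $k$. In particular, when $\ell = \lfloor n/(k+1)\rfloor + 1$, the bound applies and \GC{} produces a valid $k$-clustering (padding with arbitrary extra centers if strictly fewer than $k$ are opened, which can only improve the proportionality guarantee).

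For the second observation, I would revisit the Chen--Fain--Lindahl--Munagala and Micha--Shah arguments and check that they in fact establish the following statement: if $W$ is the output of \GC{} with parameter $\ell$, and $N' \subseteq N$ is a group of at least $\ell$ agents that simultaneously prefer some point $c$, then for at least one $i \in N'$ one has $d(i, W) \le \alpha \cdot d(i,c)$, with $\alpha = 1+\sqrt{2}$ in general and $\alpha = 2$ in Euclidean space. The argument considers the smallest ball around $c$ containing $\ell$ agents and tracks when \GC{} first assigns one of them; the inequalities only depend on the radius of that ball, the triangle inequality (or Euclidean geometry), and the fact that $\ell$ agents triggered an assignment. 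Nowhere is the value $n/k$ used. Thus the identical proof yields the claimed bounds for every integral $\ell$.

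I do not expect a real obstacle here: the first observation is a simple counting argument, and the second is a verification that the published proofs are already stated in sufficient generality. The only slightly delicate point is confirming that \GC{} can legitimately terminate with fewer than $k$ centers and still be called a $k$-clustering; this is handled by the standard convention of filling up with arbitrary additional centers, which can only decrease $d(i,W)$ for every agent and hence cannot violate $(\alpha,\ell)$-proportionality.
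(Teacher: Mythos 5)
Your proposal is correct and follows essentially the same route as the paper, which justifies the theorem by exactly these two observations: that \GC{} with quota $\ell > n/(k+1)$ opens at most $\lfloor n/\ell\rfloor \le k$ centers (since each opening consumes $\ell$ disjoint previously unassigned agents), and that the per-group arguments of \citet{CFL+19a} and \citet{MiSh20a} never use the specific value $\ell = n/k$. Your write-up is in fact more explicit than the paper's one-line justification, but there is no substantive difference in approach.
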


Despite the many follow-up works studying several aspects of proportional clustering \citep*{KKK24a, ALM23a, KePe23a, CMS24a}, no work has been able to improve either the lower or upper bounds of \citet*{CFL+19a} and \citet*{MiSh20a} so far.

\subsection{\texorpdfstring{$\beta$}{β}-plurality points}

Independently of the previous discussion, \citet{ADGH21a} introduced the concept of a $\beta$-Plurality point inspired by works on Voronoi games and the spatial theory of voting \citep{Blac48a, Down57a, EnHi83a}.
We are given a population of agents, each corresponding to a point in a metric space, who want to elect another point in the space.
The goal is to find a point such that no other point is ``strongly'' preferred to it by a majority of the agents.
If the goal was to simply find a point such that no other point is preferred by a majority, this would be the classical Condorcet winner problem.
It is commonly known that such a Condorcet winner need not exist, even in $\mathbb{R}^2$, \citep[see e.g.][]{LVV24a}.
\citeauthor{ADGH21a} relax this condition by requiring that no point should be preferred by a majority by at least a factor of $\beta$.\footnote{Here, we state the definition as in \citet[Definition 2.1]{FiFi24a}. The original version \citep{ADGH21a} requires that $\lvert \{i \in N : \beta \cdot d(i,p) < d(i,q) \}\rvert \ge \lvert \{i \in N : \beta \cdot d(i,p) > d(i,q) \}\rvert$. However, existence results for these two versions were shown to be equivalent by \citet{FiFi24a}.}
\begin{definition}
    Let $\beta \le 1$. For a metric space $(\mathcal X, d)$ and a set $N \subseteq \mathcal X$ of agents, a point $p \in \mathcal{X}$ is a $\beta$-plurality point if 
    \[
	    \textstyle\lvert \{i \in N : \beta \cdot d(i,p) \le d(i,q) \}\rvert \ge \frac{\lvert N \rvert}{2} \text{ for all } q \in \mathcal{X}.
    \]
\end{definition}

\citet{ADGH21a} showed that in $\mathcal{X} = \mathbb{R}^2$ the best possible achievable $\beta$ is $\frac{\sqrt{3}}{2}$ and gave an additional lower bound of $\frac{1}{\sqrt{d}}$ for $\mathcal{X} = \mathbb{R}^d$. In a follow-up work, \citet{FiFi24a} showed that for arbitrary metric spaces, the best possible $\beta$ is between $\sqrt{2} - 1$ and $\frac{1}{2}$ and improved the lower bound for $\mathbb{R}^d$ (for $d \ge 4$) to $0.557$.

\subsection{Connecting plurality points and proportionality}

\begin{table}[tb]
\centering
\resizebox{\columnwidth}{!}{%
\begin{tabular}{@{}lcccc@{}} 
\toprule 
  & Arbitrary Metric & $\mathbb{R}^2$  & $\mathbb{R}^d$  \\ %
 \midrule
$\alpha$-proportionality  & $2 \le \alpha \le 1 + \sqrt{2}^\clubsuit$ & $\frac{2}{\sqrt{3}} \le \alpha \le 2^\spadesuit$ & $\frac{2}{\sqrt{3}} \le \alpha \le 2^\spadesuit$ \\ %
$\beta$-Plurality   & $2 \le \frac{1}{\beta} \le 1 + \sqrt{2}^\heartsuit$ & $\frac{1}{\beta} = \frac{2}{\sqrt{3}}^\diamondsuit $ & $\frac{2}{\sqrt{3}}^\diamondsuit \le \frac{1}{\beta} \le \min\left(\sqrt{d}^\diamondsuit, 1.8^\heartsuit \right)$\\ %
\bottomrule
\end{tabular}%
}
    \caption{Comparison of the previously known results. For better comparability, we state the results on $\beta$-plurality in terms of $\frac{1}{\beta}$. Results marked with a $\clubsuit$ are by \citet*{CFL+19a}, with a $\spadesuit$ by \citet{MiSh20a}, $\heartsuit$ by \citet{FiFi24a}, and $\diamondsuit$ by \citet*{ADGH21a}.}
    \label{tab:rules-num-disagreement}
\end{table}
Indeed, if one compares the results achieved for proportional clustering and $\beta$-plurality points, they look very similar, see \Cref{tab:rules-num-disagreement}. This is not a coincidence, as it is easy to show that plurality points are equivalent to proportional clusterings of size one (when using the Droop quota).
\begin{theorem}
    A point $p$ is a $\beta$-plurality point if and only if the clustering $\{p\}$ satisfies $\frac{1}{\beta}$-Droop proportionality.
\end{theorem}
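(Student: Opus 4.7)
The plan is to show that both properties, after unpacking the definitions, reduce to the same statement: for every $q \in \mathcal X$ the number of agents $i$ with $\beta \, d(i,p) \le d(i,q)$ is at least $n/2$. The proof is essentially a matter of carefully manipulating the Droop quota $\ell = \lfloor n/(k+1)\rfloor + 1$ in the special case $k=1$, taking a negation, and checking that strict/non-strict inequalities line up.

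First I would specialize $k=1$, so that the quota becomes $\ell = \lfloor n/2 \rfloor + 1$, and write out what it means for $\{p\}$ to \emph{violate} $\frac{1}{\beta}$-Droop proportionality: there exist a point $c \in \mathcal X$ and a set $N' \subseteq N$ with $|N'| \ge \lfloor n/2 \rfloor + 1$ such that $\frac{1}{\beta} d(i,c) < d(i,p)$ for all $i \in N'$. Equivalently, $\{p\}$ is $\frac{1}{\beta}$-Droop proportional if and only if for every $c \in \mathcal X$,
\[
    \bigl|\{ i \in N : \tfrac{1}{\beta} d(i,c) < d(i,p) \}\bigr| \le \lfloor n/2\rfloor.
\]
Taking complements inside $N$ and using that $\tfrac{1}{\beta} d(i,c) \ge d(i,p) \iff \beta\, d(i,p) \le d(i,c)$, this becomes
\[
    \bigl|\{ i \in N : \beta\, d(i,p) \le d(i,c) \}\bigr| \ge n - \lfloor n/2 \rfloor = \lceil n/2 \rceil.
\]

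Next I would compare this to the $\beta$-plurality condition, which requires $|\{i \in N : \beta \, d(i,p) \le d(i,q)\}| \ge n/2$ for every $q$. Because the left-hand side is an integer, the inequality $\ge n/2$ is equivalent to $\ge \lceil n/2 \rceil$, so the two conditions coincide term-by-term (taking $q=c$). This yields both directions of the equivalence simultaneously.

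The only mildly delicate step is the integer-rounding at the end: one must check that the Droop quota $\lfloor n/2 \rfloor + 1$, after complementing, produces exactly the threshold $\lceil n/2 \rceil$ that matches the plurality definition for both even and odd $n$, and that the strict inequality in the definition of proportionality lines up with the non-strict one in the plurality definition after negation. Beyond this small piece of bookkeeping, the proof is a direct unfolding of definitions and I would not expect any other obstacle.
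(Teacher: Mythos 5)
Your proposal is correct and matches the paper's own argument: both proofs are a direct unfolding of the two definitions, identifying the deviating coalition with the complement of the set $\{i : \beta\, d(i,p) \le d(i,q)\}$ and checking that the Droop quota $\lfloor n/2\rfloor + 1$ lines up with the threshold $n/2$ after complementation. The only cosmetic difference is that you phrase it as a single chain of equivalences with explicit $\lceil n/2\rceil$ bookkeeping, while the paper argues the two implications separately.
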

\begin{proof}
    Let $p$ be a $\beta$-plurality point, $N' \subseteq N$ of size $\lfloor\frac{\lvert N\rvert}{2}\rfloor + 1$, and $q \in \mathcal{X}$. Since $p$ is a $\beta$-plurality point, we know that $\lvert \{i \in N : d(i,p) \le \frac{1}{\beta} d(i,q)\}\rvert \ge \frac{\lvert N \rvert}{2}$. Thus, there must exist a point in $N'$ not preferring $q$ by a factor of more than $\frac{1}{\beta}$ and thus $\{p\}$ is $\frac{1}{\beta}$-Droop proportional. Similarly, if $p$ is not a $\beta$-plurality point, the set $N \setminus \{i \in N : d(i,p) \le \frac{1}{\beta} d(i,q)\}$ of more than $\frac{\lvert N \rvert}{2}$ agents can deviate to~$q$; thus $p$ does not satisfy $\frac{1}{\beta}$-Droop proportionality.
\end{proof}

Thus, the stronger framework of proportional clustering provides a generalization of $\beta$-plurality points, and has interestingly enough mostly arrived to the same existential results.\footnote{Sadly, we do not find a way to apply the methods of \citet{FiFi24a} and \citet{ADGH21a} to find stronger bounds for the proportional clustering problem, as they explicitly construct only a single point.}

\section{Plurality Points and Distortion}

The \emph{(metric) distortion} of a point measures how well a given point approximates the ``social cost'', i.e., the total distance to all agents.
When the distance metric is given, finding the minimum distortion point (candidate) is equivalent to the $1$-median problem and thus trivial.
However, when we are only given \emph{ordinal} information, i.e., an ordering of the distances from each point, from closest to furthest, finding the minimum distortion point is more challenging and has gained significant attention in recent years \citep{GHS20a, KiKe22a}.
Formally, distortion is defined as follows.

\begin{definition}
	For a metric space $(\mathcal X, d)$ and a set $N \subseteq \mathcal X$ of agents, the \emph{social cost} of a point $p \in \mathcal{X}$ is 
    $%
    d(N, p) = \sum_{i \in N} d(i,p).
    $ %
    The \emph{distortion} of $p$ is 
    \[
    \sup_{q \in \mathcal{X}}\frac{d(N,p)}{d(N, q)}.
    \]
\end{definition}

It is well known that a $1$-plurality point (i.e., a Condorcet winner) has a distortion of $3$ \citep{ABE+18a}.
We generalize this to the case of arbitrary $\beta$.

\begin{theorem}
    Every $\beta$-plurality point has a distortion of $2\frac 1 \beta + 1$.
\end{theorem}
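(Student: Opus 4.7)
The plan is to generalize the standard argument that a Condorcet winner (a $1$-plurality point) has distortion at most $3$. Let $p$ be a $\beta$-plurality point and fix an arbitrary alternative $q \in \mathcal{X}$. I would partition the voters into
\[
A = \{i \in N : \beta \cdot d(i,p) \le d(i,q)\} \quad \text{and} \quad B = N \setminus A,
\]
so that by the plurality condition $|A| \ge n/2 \ge |B|$.

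For the ``happy'' voters in $A$ the plurality inequality rearranges to $d(i,p) \le \tfrac{1}{\beta} d(i,q)$, which we sum directly. For voters in $B$ we only have the weak triangle bound $d(i,p) \le d(i,q) + d(p,q)$, so the crux is to charge the $|B|$ copies of $d(p,q)$ to voters in $A$. To do this, note that for any $i \in A$ the triangle inequality combined with the plurality bound gives
\[
d(p,q) \le d(p,i) + d(i,q) \le \left(1 + \tfrac{1}{\beta}\right) d(i,q).
\]
Since $|A| \ge |B|$, fix an arbitrary injection from $B$ into $A$, call its image $A' \subseteq A$, and charge each $d(p,q)$ term used for $i \in B$ to the matched agent in $A'$.

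Adding everything up, the coefficient on $d(i,q)$ becomes $1$ for $i \in B$, $\tfrac{1}{\beta}$ for $i \in A \setminus A'$, and $\tfrac{1}{\beta} + (1 + \tfrac{1}{\beta}) = 1 + \tfrac{2}{\beta}$ for $i \in A'$. Since $\beta \le 1$, the maximum of these three is $1 + \tfrac{2}{\beta}$, yielding
\[
\sum_{i \in N} d(i,p) \le \left(1 + \tfrac{2}{\beta}\right) \sum_{i \in N} d(i,q),
\]
which is the claimed distortion bound; plugging in $\beta = 1$ recovers the classical $3$.

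I do not expect a real obstacle here: the approach is a direct template, and the only thing to verify carefully is the matching/accounting step to ensure the factor $2/\beta + 1$ rather than something worse (e.g.\ $3/\beta$). The use of $|A| \ge |B|$ is what keeps the bookkeeping tight, and the whole argument only requires triangle inequality plus the defining condition of a $\beta$-plurality point.
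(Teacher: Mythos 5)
Your proof is correct and follows essentially the same route as the paper's: partition the agents by the plurality condition, use $|A|\ge|B|$ to charge each $d(p,q)$ term from a $B$-agent to a distinct $A$-agent, and bound $d(p,q)\le d(p,i)+d(i,q)$ via the triangle inequality and the plurality bound. The paper just performs the charging implicitly inside a chain of inequalities rather than via an explicit injection.
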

\begin{proof}
    Let $p$ be a $\beta$-plurality point and $q$ be any other point. Let $N' = \{i \in N : \beta \cdot d(i,p) \le d(i,q)\}$. Then we have 
    \begin{align*}
        d(N, p) &= \sum_{i \in N} d(i,p) = \sum_{i \in N'} d(i,p) + \sum_{i \in N\setminus N'} d(i,p) \\&\le \sum_{i \in N'} d(i,p) + \sum_{i \in N\setminus N'} (d(i,q) + d(q,p)) \\&\le \sum_{i \in N'} (d(i,p)+ d(q,p)) + \sum_{i \in N\setminus N'} d(i,q) \\&\le \sum_{i \in N'} (2d(i,p)+ d(i,q)) + \sum_{i \in N\setminus N'} d(i,q) \le \left(2\frac{1}{\beta} + 1\right) d(N, q).
    \end{align*}
    Here we used the fact that $\lvert N'\rvert \ge \lvert N \setminus N'\rvert$ and the triangle inequality.
\end{proof}

It is, however, easy to see that a constant factor distortion does not imply that a point is a constant plurality point. Consider an instance on the line, with $\frac{n}{2} - 1$ points at $0$, $\frac{n}{2} + 1$ points at $1$. Here, any point at location $0$ has a distortion of at most $2$. However, the $\frac{n}{2} + 1$ agents are a witness that this is not any finite plurality point.

\section{Plurality Points Using Only Ordinal Information}

In their seminal work, \citet{KiKe22a} introduced \PV{} (as a simple variant of the rule of \citet{GHS20a}, see also \citet{KiKe23a}) as a voting rule that uses only ordinal information and always selects a point with distortion $3$.
The rule works in two phases.
In the first phase, each agent nominates their top choice, i.e., each point receives a score equal to its plurality score.
Then the agents go one-by-one and decrement the score of the point with the lowest positive score.
The last point to get decremented wins.
We show that the point selected by \PV{} also is an $\mathcal{O}(1)$-plurality point.

\begin{theorem}
    \PV{} always selects a $(\sqrt{5} - 2)$-plurality point.
\end{theorem}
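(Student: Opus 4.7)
The plan is to reuse the matching that \citet{KiKe22a} use to prove that \PV{} has distortion $3$, and then extract a plurality bound from it. Fix any alternative $q \in \mathcal X$. Because \PV{} runs for exactly $|N|$ rounds and every candidate's score is decremented to zero, we have $|V_c| = |T_c|$ for every candidate $c$, where $V_c$ and $T_c$ denote the sets of agents who vetoed $c$ and whose top choice is $c$, respectively. Fix a bijection $\phi \colon N \to N$ that restricts to a bijection $V_c \to T_c$ for every $c$. A two-step triangle inequality---using $d(j, c) \ge d(j, p)$ for $j \in V_c$ with $c \neq p$ (which holds because $p$ had positive score at the time $j$ vetoed $c$), together with $d(\phi(j), c) \le d(\phi(j), q)$ (which holds because $c = \mathrm{top}(\phi(j))$)---then yields the per-agent bound
\[
d(j, p) \;\le\; d(j, q) + 2\, d(\phi(j), q) \qquad \text{for every } j \in N.
\]
The case $j \in V_p$ is handled by matching $\phi(j) \in T_p$ and using $d(\phi(j), p) \le d(\phi(j), q)$.

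Set $\beta := \sqrt{5} - 2$, $\varphi := (1+\sqrt{5})/2 = (1-\beta)/(2\beta)$, and $S := \{j \in N : \beta\, d(j, p) > d(j, q)\}$; the goal is to show $|S| \le |N|/2$. A one-line rearrangement of the per-agent bound using $d(j, q) < \beta\, d(j, p)$ gives
\[
d(\phi(j), q) \;>\; \varphi\, d(j, q) \qquad \text{for every } j \in S.
\]
Contrapositively, an agent $j$ with $d(\phi(j), q) \le \varphi\, d(j, q)$ is automatically outside $S$, so it suffices to prove that on every cycle of $\phi$ at most half of the indices $j$ have $d(\phi(j), q) > \varphi\, d(j, q)$. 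In parallel, I plan to derive a second growth inequality from the same matching: for $j \in S$ with $\phi(j) \in S$, applying the triangle inequality around the shared candidate $c = \mathrm{vet}(j) = \mathrm{top}(\phi(j))$ (which is \emph{far} from $j$, at distance $\ge d(j, p)$, and \emph{close} to $\phi(j)$, at distance $\le \beta\, d(\phi(j), p)$) gives $d(\phi(j), p) > \varphi\, d(j, p)$.

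The concluding step therefore reduces to a combinatorial claim on cycles of $\phi$: for every cycle $C = (j_1, \ldots, j_L)$, $|S \cap C| \le L/2$; summing over all cycles then yields $|S| \le |N|/2$. I expect this to be the main obstacle. The cycle constraints $\prod_{i} d(j_{i+1}, q)/d(j_i, q) = 1$ and $\prod_{i} d(j_{i+1}, p)/d(j_i, p) = 1$, each combined with its growth factor $\varphi$ at $S$-indices, do not individually prevent more than $L/2$ indices from lying in $S$, because the ratios at the non-$S$ indices are a priori unconstrained from below. The delicate part of the argument will be to combine the two growth estimates (on $d(\cdot, q)$ for every $j \in S$, and on $d(\cdot, p)$ for every consecutive pair in $S$) and to exploit the golden-ratio identity $\varphi^{2} = \varphi + 1$ to show that whenever more than half of a cycle lies in $S$, the forced growth along the dense $S$-pattern cannot be absorbed by the at-most-$L/2$ complementary positions, producing the desired contradiction.
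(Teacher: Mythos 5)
Your setup is sound --- the bijection $\phi$ with $|V_c|=|T_c|$, the per-agent bound $d(j,p)\le d(j,q)+2d(\phi(j),q)$, and the two growth estimates are all correct and are essentially the same triangle-inequality computations the paper performs. But the proof has a genuine gap at exactly the point you flag: the claim that at most half of every cycle of $\phi$ lies in $S$ is never established, and the route you propose for it (multiplying the ratios $d(j_{i+1},q)/d(j_i,q)$ and $d(j_{i+1},p)/d(j_i,p)$ around a cycle and hoping the forced growth at $S$-indices cannot be absorbed elsewhere) does not close, since, as you yourself note, nothing bounds the ratios at non-$S$ positions from below. As written, the argument does not prove the theorem.

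The missing step is, however, a one-line pigeonhole rather than a delicate multiplicative analysis. If $|S|>n/2$, then since $\phi$ is a bijection, $S\cap\phi(S)\neq\emptyset$, i.e., there is some $j\in S$ with $\phi(j)\in S$. The shared candidate $c=\mathrm{vet}(j)=\mathrm{top}(\phi(j))$ cannot equal $p$, because $\phi(j)\in S$ gives $d(\phi(j),q)<\beta\, d(\phi(j),p)\le d(\phi(j),p)$, so $p$ is not $\phi(j)$'s top choice. Your per-agent bound together with $d(\phi(j),p)\le d(\phi(j),q)+d(q,j)+d(j,p)$ then yields $1/\beta<\min\left(1+2x,\,3+2/x\right)\le 2+\sqrt5$ with $x=d(\phi(j),q)/d(j,q)$, contradicting $\beta=\sqrt5-2$. (This also disposes of your per-cycle claim: more than $L/2$ marked elements on a length-$L$ cycle forces two adjacent ones.) The paper obtains the same single pair of agents by a different counting argument: the top choices of agents in $S$ carry total plurality score more than $n/2$, so the more than $n/2$ vetoes cast by $S$ cannot all land on the remaining candidates, whose total score is less than $n/2$; hence some agent of $S$ vetoes the top choice of another agent of $S$ while $p$ still has positive score, and the same two triangle inequalities and the same optimization over $x$ finish the proof. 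So your matching-based framework is a workable alternative to the paper's score-counting argument, but only after the concluding step is replaced by the pigeonhole observation; the cycle-product analysis you sketch is not a viable path to it.
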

\begin{proof}
	Let $p$ be the point selected by \PV{} and let $q$ be any other point.
	We choose $\beta' \le 1$ to be the smallest value such that $p$ is not a $\beta'$-plurality point.
	Then $S = \{ i \in N : \beta' d(i, p) > d(i, q) \}$ contains more than $n/2$ points.
	Let $C \subseteq \mathcal X$ be the set of candidates nominated by points in $S$ in the first phase and note that $p \notin C$ by definition of $S$.
	Moreover, the sum of plurality scores of $C$ is greater than $n/2$ (and the score sum of $\mathcal X \setminus C$ is less than $n/2$).
	If all agents in $S$ only decrement scores of candidates outside of $C$, then $p$ will not be selected.
	Thus, some agent in $S$ decrements the score of some $r \in C$.
	Let $j \in S$ be the agent that nominated $r$.
	Then we have
	$d(i, p) \le d(i, r) \le d(i, q) + d(q, j) + d(j, r) \le d(i, q) + 2d(j, q)$.
	Similarly,
	$d(j, p) \le d(j, q) + d(q, i) + d(i, p)$.
	By definition of $S$, $\frac{1}{\beta'}$ is less than
	\begin{align*}
		\min\left(\frac{d(i,p)}{d(i,q)}, \frac{d(j,p)}{d(j,q)}\right) &\le \min\left(\frac{d(i,q) + 2d(j,q)}{d(i,q)}, \frac{d(j,q) + d(i,q) + d(i,p)}{d(j,q)}\right) \\& \le \min\left(\frac{d(i,q) + 2d(j,q)}{d(i,q)}, \frac{3d(j,q) + 2d(i,q)}{d(j,q)}\right)
											    \\ & \le \max_{x \ge 0} \min\left(1 + 2x, 3 + \frac{2}{x}\right) = 2 + \sqrt{5} = \frac{1}{\sqrt{5}-2}.
	\end{align*}
	Therefore, \PV{} always selects a $(\sqrt{5} - 2)$-plurality point.
\end{proof}

\section{Proportional Clusterings Using Only Ordinal Information}

We next turn to the problem of finding proportional clusterings when we only have ordinal information at hand.
This problem was considered by \citet{KKK24a} who showed that one can always find a $\frac{5 + \sqrt{41}}{2}$-proportional clustering, but there are instances where no (purely ordinal) algorithm can satisfy $\alpha$-proportionality for $\alpha < 2 + \sqrt{5}$.
For their upper-bound, they employed the Expanding Approvals Rule (EAR) by \citet{AzLe20a}.

We close the gap by showing that one can always find a $(2+\sqrt{5})$-proportional clustering using only ordinal information.
Indeed, to prove this, it suffices to use the property of \emph{rank-JR}, which was conceptually introduced by \citet{BrPe23a}.
We again define an $\ell$-quota version of the notion.
To this end, we say that an agent $i$ has $\rank(i, c)$ for candidate $c$ if this candidate is the $\rank(i, c)$-th closest candidate to this agent.
An outcome $W$ of size $k$ satisfies $\ell$-rank-JR,
if for every rank $r$, every set $N' \subseteq N$ of size at least $\ell$ such that there is a candidate $c$ with $\rank(i, c) \le r$ for each $j \in N'$,
there is at least one $w \in W$ and $i \in N'$ such that $\rank(i, w) \le r$.
We remark that an outcome returned by EAR satisfies $\ell$-rank-JR with $\ell \ge \frac{n}{k}$.

\begin{theorem}
	Let $\ell \in \mathbb{N}$ with $\ell > \frac{n}{k+1}$ and $W$ be an outcome satisfying $\ell$-rank-JR.
	Then $W$ satisfies $(2+\sqrt{5})$-approximate $\ell$-proportionality.
\end{theorem}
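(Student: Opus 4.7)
The plan is to argue by contradiction, applying $\ell$-rank-JR at a carefully chosen rank. Suppose the conclusion fails, so there exist $N' \subseteq N$ with $|N'| \ge \ell$ and a point $c \in \mathcal X$ with $(2+\sqrt{5})\,d(i,c) < d(i,W)$ for every $i \in N'$. I would set $r := \max_{i \in N'} \rank(i,c)$ and let $j^* \in N'$ be an agent attaining this maximum. By construction $\rank(i,c) \le r$ for every $i \in N'$, so the premise of $\ell$-rank-JR applies to the pair $(N',c)$ at rank $r$, yielding $i^* \in N'$ and $w^* \in W$ with $\rank(i^*, w^*) \le r$.

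The key geometric step is to convert this rank bound into a metric bound on $d(i^*, w^*)$. Because $\rank(j^*, c) = r$, all $r$ candidates closest to $j^*$ lie within distance $d(j^*, c)$ of $j^*$. Applying the triangle inequality twice (via $j^*$ and $c$) shows that each of these $r$ candidates lies within distance $d(i^*, c) + 2\,d(j^*, c)$ of $i^*$. Hence $i^*$ has $r$ distinct candidates within this distance from itself, so its $r$-th closest candidate is at most this far; since $\rank(i^*, w^*) \le r$, we conclude $d(i^*, w^*) \le d(i^*, c) + 2\,d(j^*, c)$.

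To finish, I would play two inequalities against each other. From $d(i^*, W) \le d(i^*, w^*)$ together with the violation assumption at $i^*$, one obtains $(2+\sqrt{5})\,d(i^*,c) < d(i^*,c) + 2\,d(j^*,c)$, i.e., $d(j^*, c) > \phi\, d(i^*, c)$, where $\phi = \frac{1+\sqrt{5}}{2}$. A second triangulation yields $d(j^*, W) \le d(j^*, w^*) \le d(j^*, i^*) + d(i^*, w^*) \le 3\,d(j^*, c) + 2\,d(i^*, c)$, and combining this with the violation assumption at $j^*$ gives $(2+\sqrt{5})\,d(j^*,c) < 3\,d(j^*,c) + 2\,d(i^*,c)$, i.e., $d(j^*, c) < \phi\, d(i^*, c)$. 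These two bounds contradict each other, establishing the theorem.

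I expect the main obstacle to be the key geometric step: the insight that the maximum-rank agent $j^*$ is useful not directly, but because the $r$ candidates closest to $j^*$ form a ``rank-$r$ certificate'' near $i^*$ after two triangle inequalities. Once that bound is in hand, the two-sided squeeze is essentially forced by the identity $2+\sqrt{5} = \phi^3$, which is exactly what makes both inequalities meet at the common threshold $\phi$.
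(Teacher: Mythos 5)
Your proposal is correct and follows essentially the same argument as the paper: you pick the agent of maximum rank for $c$, use its $r$ closest candidates as a distance certificate that transfers (via two triangle inequalities) to the agent supplied by $\ell$-rank-JR, and then balance the two resulting ratios at $\phi=\frac{1+\sqrt5}{2}$. The only difference is presentational — you run the final step as a contradiction squeezing $d(j^*,c)/d(i^*,c)$ against $\phi$ from both sides, whereas the paper states the same two bounds directly and optimizes $\max_{x\ge 0}\min\left(3+\frac{2}{x},\,1+2x\right)=2+\sqrt5$.
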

\begin{proof}
	Let $c \notin W$ be an unchosen candidate and $N' \subseteq N$ be a subset of agents of size at least $\ell$.
	Let $i \in N'$ be the agent that gives the greatest rank to $c$, let $r = \rank(i, c)$, and let $C_r$ be the set of the $r$ candidates that are closest to $i$.
	Note that $d(i, c_r) \le d(i, c)$ for any $c_r \in C_r$.
	As $W$ satisfies $\ell$-rank-JR, there is an agent $j \in N$ and a candidate $w \in W$ such that $\rank(j, w) \le r$.
	Then, for each $c_r \in C_r$ we have $d(j, c_r) \le d(j, c) + d(c, i) + d(i, c_r) \le d(j, c) + 2d(i, c) \eqqcolon y$.
	As there are $r$ candidates in $C_r$, the distance of $j$ to its $r$-th ranked candidate is at most $y$; thus also $d(j, w) \le y$.
	Putting this together, we obtain that $W$ satisfies $\alpha$-proportionality, where $\alpha$ is at most
	\begin{align*}
		\min\left(\frac{d(i,w)}{d(i,c)}, \frac{d(j,w)}{d(j,c)}\right)
		&\le \min\left(\frac{d(i,c) + d(j,c) + d(j,w)}{d(i,c)}, \frac{d(j,w)}{d(j,c)}\right)\\
		&\le \min\left(\frac{3d(i,c) + 2d(j,c)}{d(i,c)}, \frac{d(j, c) + 2d(i,c)}{d(j,c)}\right)\\
		&\le \max_{x \ge 0} \min\left(3 + \frac{2}{x}, 1 + 2x\right) = 2 + \sqrt{5}.\qedhere
	\end{align*}
\end{proof}

In fact we can also show that rank-PJR (a slightly stricter notion than rank-JR) implies an approximation to the core of \citet{EbMi23a}.
As this result is mostly a modification of the previous theorem as well as the proofs of \citet{KePe23a} we defer definitions and proof to the appendix.
\begin{theorem}
    Let $W$ be a committee satisfying rank-PJR. Then $W$ is in the $(4 + \sqrt{13})$-$q$-core for all $q \le k$.
\end{theorem}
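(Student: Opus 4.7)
My plan is to adapt the rank-JR proof of the previous theorem, using rank-PJR to secure $q$ distinct witnesses in $W$ corresponding to the $q$ deviating candidates of the coalition. I would first pin down the $q$-core definition of \citet{EbMi23a} and the rank-PJR notion of \citet{KePe23a} (its $\ell$-quota version, strengthening rank-JR to require $q'$ distinct representatives whenever a group of $\ge q'\ell$ agents is $q'$-cohesive within some rank) in the appendix.

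Assume for contradiction that $W$ is not in the $(4+\sqrt{13})$-$q$-core. Then there exist a coalition $S \subseteq N$ with $|S| \ge q\ell$, a set $T = \{c_1, \ldots, c_q\} \subseteq \mathcal X$, and a balanced assignment $\phi\colon S \to T$ with $|\phi^{-1}(c)| \le \ell$ for each $c \in T$, such that $(4+\sqrt{13}) \cdot d(i, \phi(i)) < d(i, W)$ for every $i \in S$. Setting $S_c = \phi^{-1}(c)$ and $r_c = \max_{i \in S_c} \rank(i, c)$, every agent of $S_c$ has $c$ among its $r_c$ closest candidates. Applying rank-PJR to the aggregated system at uniform rank $r = \max_c r_c$ then yields $q$ distinct candidates $w_1, \ldots, w_q \in W$, each within rank $r$ of some witnessing agent $j_l \in S$.

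The next step mimics the rank-JR template, but with one key complication: the witness $j_l$ for $w_l$ need not lie in $S_{c_l}$; in general $j_l \in S_{c'}$ for some $c' = \phi(j_l)$ different from $c_l$. For a fixed pair $(j, w)$ with $j \in S_{c'}$ and the agent $i \in S_c$ realizing $r_c = r$, the usual triangle chain through $c$ gives $d(j, w) \le d(j, c) + 2 d(i, c)$. Since the deviation condition for $j$ is stated in terms of $d(j, c')$ rather than $d(j, c)$, an extra triangle hop (using $d(c, c') \le d(c, i) + d(i, c')$ together with $d(i, c) \le d(i, c')$, as $c$ is $i$'s closest member of $T$) is needed to convert the bound, producing inequalities of the form $d(i, w) \le 3 d(i, c) + 4 d(j, c')$ and $d(j, w) \le 5 d(j, c') + 3 d(i, c)$. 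Writing $x = d(j, c')/d(i, c)$ and combining with the strict deviation hypotheses yields
\[
4 + \sqrt{13} \;<\; \min\!\left(\frac{d(i, w)}{d(i, c)},\; \frac{d(j, w)}{d(j, c')}\right) \;\le\; \max_{x \ge 0} \min\!\left(3 + 4x,\; 5 + \frac{3}{x}\right) \;=\; 4 + \sqrt{13},
\]
a contradiction.

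The main obstacle is precisely this extra triangle hop needed to bridge $d(j, c)$ and $d(j, c')$: the candidate $c$ against which the rank-PJR argument is phrased need not equal the candidate $c'$ to which $j$ is assigned by $\phi$, so one loses a factor as $c$ travels across $T$ via intermediate agents. This is exactly what worsens the constant from $2+\sqrt{5}$ in the rank-JR setting to $4+\sqrt{13}$ here. I expect this bookkeeping, together with carefully arranging the rank-PJR hypothesis so that all $q$ groups $S_c$ are covered at the uniform rank $r = \max_c r_c$, to be the most delicate part of the proof, to be executed in the appendix following the framework of \citet{KePe23a}.
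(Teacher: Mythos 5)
Your proposal diverges from the paper in a way that leaves two genuine gaps, both traceable to the definition of the $q$-core you guessed. The paper's $q$-core (Definition~\ref{def:core}, following \citet{EbMi23a}) measures an agent's cost by $d^q(i,\cdot)$, the distance to the \emph{$q$-th closest} point of the committee: a deviation is a group $N'$ of at least $\mu\ell$ agents together with a set $C'$ of between $q$ and $\mu$ candidates such that $\alpha\, d^q(i,C') < d^q(i,W)$ for all $i \in N'$; there is no assignment $\phi$ and no balancedness condition. Under this definition your concluding step does not close the argument: exhibiting a single winner $w$ with $d(i,w) \le \alpha\, d(i,c)$ bounds the distance to the \emph{closest} winner, not $d^q(i,W)$; to refute the deviation you must place $q$ distinct winners all within distance $\alpha\, d^q(i,C')$ of one agent. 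This is exactly what the paper's chain $d^q(i,W) \le \max_{h \in N''}\bigl(d(i,h)+d(h,c_h^r)\bigr)$ accomplishes, and it is the step responsible for the coefficients $5+\frac{2}{x}$ and $6x+3$ (rather than your $3+4x$ and $5+\frac{3}{x}$) in the final optimization. Conversely, under your assignment-based definition the statement would already follow from plain rank-JR with the better constant $2+\sqrt{5}$, since blocking that kind of deviation needs only one nearby winner --- a sign that this cannot be the intended notion.

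The second gap is that your invocation of rank-PJR is not licensed. Rank-PJR requires a group of at least $\mu\ell$ agents \emph{all} of whom rank \emph{all} of $\mu$ common candidates within the threshold rank $r$. Your coalition $S$ with $r=\max_c r_c$ only guarantees that each $i \in S_c$ ranks its own $\phi(i)=c$ within $r$; an agent of $S_{c_1}$ may rank $c_2,\dots,c_q$ arbitrarily badly, so $S$ need not be $q$-cohesive at any uniform rank, and rank-PJR simply does not apply. You flag this as ``the most delicate part'' but do not resolve it, and it cannot be resolved within your setup. The paper resolves it with a counting argument you are missing: each agent of $N'$ marks its $q$ favourites in $C'$, some $c\in C'$ receives at least $\ell q$ marks, and the sub-coalition $N''$ of its markers (of size at least $q\ell$) is then shown to rank all of $C_i^q$ (agent $i$'s $q$ favourites in $C'$) within a common rank $r$, which is precisely the cohesive structure rank-PJR needs. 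Without this step and without the $d^q$ bookkeeping, your argument does not prove the paper's theorem; that your optimization also happens to evaluate to $4+\sqrt{13}$ is coincidental.
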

\section{Conclusion}
We observed a simple equivalence between the Droop proportional clustering problem and the $\beta$-plurality problem. At the moment, for arbitrary metric spaces, both problems are stuck at the exact same bounds. Is it perhaps possible to first ``attack'' the simpler $\beta$-plurality problem, e.g., can maybe more sophisticated methods lead to improved bounds for $k = 1$ and can these be translated to larger $k$? Relatedly, can the methods of \citet{ADGH21a} and \citet{FiFi24a} for Euclidean spaces be generalized to the clustering problem? Finally, \citet{MiSh20a} showed that for the $\ell_1$ and $\ell_\infty$ norms \GC{} does not improve upon the $1 + \sqrt 2$ bound. Is it still possible to obtain better bounds in these spaces?

\section{Acknowledgements}
This research was funded by the Singapore Ministry of Education under grant
number MOE-T2EP20221-0001. We thank Warut Suksompong for helpful comments.
\bibliographystyle{abbrvnat}
\bibliography{abb, algo}

\appendix 

\section{A connection between rank-PJR and the core}

We first define the \emph{$q$-core} notion---a generalization of proportionality in which an agent is not represented by the closest center but the $q$ closest centers.
For $W \subseteq \mathcal X$ and $q \le \lvert W \rvert$, define $d^q(i, W)$ be the distance of $i$ to the $q$-th clostest point in $W$.
By the triangle inequality, $d^q(i, W) = d(i, j) + d^q(j, W)$, for $i, j \in \mathcal X$.

As with the previous notions, we can also lift the core notion to any quota $\ell > \frac{n}{k+1}$.
\begin{definition}
    \label{def:core}
    For $\alpha, \ell \ge 1$ an outcome $W$ is in the \emph{$\alpha$-approximate $\ell$-quota $q$-core}, if there is no $\mu \in \mathbb{N}$ and no $N' \subseteq N$ with $|N'| \ge \mu \cdot \ell$ and set $C' \subseteq \mathcal X$ with $q \le \lvert C' \rvert \le \mu$ such that $\alpha \cdot d^q(i, C') < d^q(i, W)$ for all $i \in N'$.
\end{definition}

Let us further define $\ell$-rank-PJR, which is a restriction of $\ell$-rank-JR.
An outcome $W$ of size $k$ satisfies $\ell$-rank-PJR,
if for every rank $r$, every $\mu \ge 1$,
and every set $N' \subseteq N$ of size at least $\mu \cdot \ell$ such that there is a set $C'$ of at least $\mu$ candidates with $\rank(i, c) \le r$ for each $i \in N'$ and $c \in C'$,
there is a set $W' \subseteq W$ of at least $\mu$ winners such that for each $w \in W'$ there is an agent $i' \in N'$ with $\rank(i', w') \le r$.

\begin{theorem}
	Let $\ell \in \mathbb N$, $\ell > \frac{n}{k+1}$, and let $W$ be an outcome satisfying $\ell$-rank-PJR.
	Then $W$ is in the $(4 + \sqrt{13})$-approximate $\ell$-quota $q$-core for every $q \le k$.
\end{theorem}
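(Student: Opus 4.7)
The plan is to adapt the proof of the previous theorem (the $(2+\sqrt{5})$-proportionality result via rank-JR) to the $q$-core setting, replacing the single witnessing winner-agent pair by the $q$ copies delivered by rank-PJR. Assume for contradiction that there exist $\mu \in \mathbb{N}$, $N' \subseteq N$ with $|N'| \ge \mu \ell$, and $C' \subseteq \mathcal{X}$ with $q \le |C'| \le \mu$, such that $(4+\sqrt{13})\, d^q(i, C') < d^q(i, W)$ for every $i \in N'$. Following the template of the previous proof, choose $(i^*, c^*) \in N' \times C'$ maximizing $r^* := \rank(i^*, c^*)$; then $\rank(i^*, c) \le r^*$ for every $c \in C'$, so the $r^*$ candidates closest to $i^*$ contain $C'$ and lie within distance $d(i^*, c^*)$ of $i^*$. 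By the triangle inequality, $d^{r^*}(j, \mathcal{X}) \le d(j, i^*) + d(i^*, c^*)$ for every $j$.

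Invoking $\ell$-rank-PJR with rank $r^*$, the set $C'$, and parameter $|C'|$ (valid because $|N'| \ge \mu \ell \ge |C'|\ell$ and the rank condition is satisfied), we obtain $W' \subseteq W$ with $|W'| \ge |C'| \ge q$ such that for each $w \in W'$ there is an agent $j(w) \in N'$ with $\rank(j(w), w) \le r^*$, hence $d(j(w), w) \le d(j(w), i^*) + d(i^*, c^*)$. Pick any $q$ distinct winners $w_1, \dots, w_q \in W'$ with $j_t := j(w_t)$, and let $j^*$ be the $j_t$ maximizing $D := d(i^*, j_t)$. Two applications of the triangle inequality then give $d(i^*, w_t) \le 2D + d(i^*, c^*)$ and $d(j^*, w_t) \le 3D + d(i^*, c^*)$ for each $t$, yielding $d^q(i^*, W) \le 2D + d(i^*, c^*)$ and $d^q(j^*, W) \le 3D + d(i^*, c^*)$. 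Finally, writing $c_{j^*}^q$ for the $q$-th-closest candidate of $C'$ to $j^*$, a further triangle bound gives $d(i^*, j^*) \le d(i^*, c_{j^*}^q) + d(c_{j^*}^q, j^*) \le d(i^*, c^*) + d^q(j^*, C')$, using that $c_{j^*}^q \in C'$ implies $\rank(i^*, c_{j^*}^q) \le r^*$.

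Combining these bounds with the contradiction hypothesis and letting $\xi$ denote an appropriate ratio of the involved $d^q$ quantities, the resulting inequalities collapse to the standard min--max identity $\max_{\xi > 0} \min(1 + 2\xi,\, 7 + 2/\xi) = 4 + \sqrt{13}$, producing the required contradiction. The main obstacle is the careful bookkeeping of the gap between $d(i^*, c^*)$ (the distance to the rank-maximizing candidate, needed to apply rank-PJR with $C'$ as the cohesive set) and $d^q(i^*, C')$ (the $q$-th-closest distance in $C'$, which appears in the deviation hypothesis); when $|C'| > q$ these two quantities need not coincide, and propagating this slack through the triangle inequality chain is precisely what weakens the constant from $2+\sqrt{5}$ (proportionality, $q=1$) to $4+\sqrt{13}$ (general $q$-core). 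The remaining steps are a routine algebraic optimization.
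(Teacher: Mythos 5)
Your overall strategy---invoke rank-PJR to extract $q$ covered winners and chain triangle inequalities through two extremal agents---has the right shape, but the way you invoke rank-PJR creates a gap that cannot be closed. You apply the axiom to the full pair $(N', C')$ at rank $r^* = \max_{(i,c) \in N' \times C'} \rank(i,c)$, and all of your subsequent distance bounds are expressed in terms of $d(i^*, c^*)$, the distance realizing that maximal rank. But the core deviation hypothesis only controls $d^q(i, C')$, the distance to the \emph{$q$-th closest} candidate of $C'$; since $\lvert C'\rvert$ may be as large as $\mu > q$, the candidate $c^*$ can be the $\lvert C'\rvert$-th closest to $i^*$ and lie arbitrarily far from every agent in $N'$ without affecting any $d^q(\cdot, C')$. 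Concretely, take $C'$ to consist of $q$ candidates at distance $\varepsilon$ from all of $N'$ plus one candidate at distance $M \gg \varepsilon$: the coalition's deviation incentive is governed by $\varepsilon$, yet $d(i^*, c^*) = M$, so your bounds $d^q(i^*, W) \le 2D + d(i^*, c^*)$ and $D \le d(i^*, c^*) + d^q(j^*, C')$ become vacuous. You flag this slack yourself as ``the main obstacle'' but then assert the rest is routine algebra; it is not: no choice of the ratio $\xi$ turns these inequalities into $\min(1 + 2\xi,\, 7 + 2/\xi)$, because one of the additive terms is simply not bounded by the quantities appearing in the denominators.

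The paper closes exactly this hole with a pigeonhole step \emph{before} rank-PJR is used. Each agent of $N'$ marks its $q$ favorite candidates within $C'$; since $\lvert N'\rvert \cdot q \ge \lvert C'\rvert \cdot \ell \cdot q$, some $c \in C'$ collects at least $\ell q$ marks, giving a subgroup $N''$ of at least $\ell q$ agents with $d(\cdot, c) \le d^q(\cdot, C')$. Rank-PJR is then applied not to $C'$ but to the $q$-element set $C_i^q$ of a single extremal agent's favorites in $C'$ (with group $N''$ and parameter $q$), so that every candidate entering the triangle-inequality chain is within $d^q(i, C')$ of $i$ or within $d^q(j, C')$ of $j$, which is what makes the ratios $5 + 2/x$ and $6x + 3$ (and hence $4 + \sqrt{13}$) come out. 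That pigeonhole reduction is the missing idea; without it your argument does not yield any finite constant.
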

\begin{proof}
	Let $N' \subseteq N$ be a subset of at least $\mu \cdot \ell$ agents and let $C' \subseteq \mathcal X$ with $q \le C' \le \mu$.
	Assume that each agent in $N'$ marks their $q$ favorite candidates that are in $C'$.
	This places $|N'| \cdot q \ge \mu \cdot \ell \cdot q \ge |C'| \cdot \ell \cdot q$ marks on the candidates in $C'$; thus there exists a candidate $c \in C'$ with at least $\ell \cdot q$ marks, and each mark is by a unique agent in $N'$.
	Let $N''$ be the set of at least $\ell \cdot q$ agents that marked $c$ and note that each $j \in N''$ has $c$ among their $q$ favorite candidates among $C'$, that is, $d(j, c) \le d^q(j, C')$.

	Let $i \in N''$ be the agent maximizing $d(i, c)$ and let $C_i^q$ be the set of $i$'s $q$ favorite candidates in $C'$.
	Let $j = \argmax_{j \in N''} \max_{c' \in C_i^q} \rank(j, c')$ be the agent assigning the maximum rank to any candidate in $C_i^q$ and let $r = \max_{c' \in C_i^q} \rank(j, c')$ be the corresponding rank.
	Then every of the at least $q \cdot \ell$ agents in $N''$ must assign rank at most $r$ to all candidates in $C_i^q$.
	As $W$ satisfies $\ell$-rank-PJR, there are at least $q$ candidates $W_i^q \subseteq W$ such that for each $w \in W_i^q$ there is an agent $i' \in N''$ with $\rank(i', w) \le r$.

	For any agent $h \in N''$ let $c_h^r$ be their candidate at rank $r$.
	Then
	\begin{align*}
		d(h, c_h^r) &\le d(h, j) + \max_{c' \in C_i^q} d(j, c')\\
			    &\le d(h, c) + d(c, j) + d(j, c) + d(i, c) + \max_{c' \in C_i^q} d(i, c')\\
			    &\le d^q(i, C') + 2d^q(j, C') + 2d^q(i, C') = 3d^q(i, C') + 2d^q(j, C').
	\end{align*}
	Moreover, using that $d(i, h) \le d(i, c) + d(c, h) \le 2d^q(i, C')$,
	we obtain
	\[
		d^q(i, W) \le \max_{h \in N''} \left( d(i, h) + d(h, c_h^r) \right) \le 5d^q(i, C') + 2d^q(j, C').
	\]
	Lastly, as $d(j, i) \le d(j, c) + d(c, i) \le d^q(j, C') + d^q(i, C')$, we have
	\[
		d^q(j, W) \le d(j, i) + d^q(i, W) \le 6d^q(i, C') + 3d^q(j, C').
	\]
	Putting this together, $W$ is in the $\alpha$-approximate $\ell$-quota $q$-core where
	\begin{align*}
		\alpha &\le \min \left( \frac{d^q(i, W)}{d^q(i, C')}, \frac{d^q(j, W)}{d^q(j, C')} \right)\\ %
		&\le \min \left( \frac{5d^q(i, C') + 2d^q(j, C')}{d^q(i, C')}, \frac{6d^q(i, C') + 3d^q(j, C')}{d^q(j, C')} \right)\\
		&\le \max_{x \ge 0} \min(5 + \frac{2}{x}, 6x + 3) = 4 + \sqrt{13}.\qedhere
	\end{align*}
\end{proof}
\end{document}

\endinput
